 \newcommand{\tcr}{}
\newtheorem{prop}{Proposition}
\newtheorem{definition}{Definition}
\newtheorem{rmq}{Remark}
\DeclareMathOperator*{\argmin}{arg\,min}
\title{
Learning Pure Nash Equilibrium in Smart Charging Games
}
\author{B. Sohet and Y. Hayel and O. Beaude and A. Jeandin
\thanks{B. Sohet and O. Beaude and A. Jeandin are with EDF Labs, France
        {\tt\small benoit.sohet@edf.fr}}%
\thanks{Y. Hayel is with LIA/CERI, Avignon University, France
        {\tt\small yezekael.hayel@univ-avignon.fr}}%
}
\begin{document}

\maketitle
\thispagestyle{empty}
\pagestyle{empty}

\begin{abstract}
Reinforcement Learning Algorithms (RLA) are useful machine learning tools to understand how decision makers react to signals.
It is known that RLA converge towards the pure Nash Equilibria (NE) of finite congestion games and more generally, finite potential games.
For finite congestion games, only separable cost functions are considered.
However, non-separable costs, which depend on the choices of all players instead of only those choosing the same resource, may be relevant in some circumstances, like in smart charging games.
In this paper, finite congestion games with non-separable costs are shown to have an ordinal potential function, leading to the existence of an action-dependent continuous potential function.
The convergence of a synchronous RLA towards the pure NE is then extended to this more general class of congestion games.
Finally, a smart charging game is designed for illustrating convergence of such learning algorithms.
\end{abstract}
\vspace{1mm}
\begin{keywords}
Reinforcement learning, Finite congestion games, Non-separable costs, Smart charging
\end{keywords}


\vspace{-1mm}
\section{Introduction}

Understanding how decision makers react to signals is the basement of complex systems analysis. In such systems where decision makers are selfish and the outcome of each individual depends on the decision of others (typically a game theoretical framework), it is important to compute the equilibrium situation (i.e., Nash Equilibrium, NE) but also to understand how each player will adapt her strategy through time and therefore learn to play her Nash strategy. Learning algorithms in game theoretical settings have been known for several years \cite{Fudenberg98}. Many techniques are based on the well known \textit{best response} principle like fictitious play \cite{Brown51}, which assumes that players choose a best reply to the observed empirical distribution of past actions of the other players. The main drawback of such techniques is the computation of the best action, which can be computationally complex and also needs specific information (utility functions, set of actions, ...). Other types of algorithms, based on reinforcement mechanisms, have also been employed in game theoretical problems. These types of decentralized learning techniques, based on experiments as in repeated games, proved to be efficient in particular games such as congestion games and more generally potential games, which have convergence properties, as illustrated recently in machine learning community with different feedback information~\cite{Cohen17}.

Congestion games are a specific type of non-cooperative games in which the cost of a commodity depends on the number of players choosing it~\cite{Nisan07}.
Some costs functions are said non-separable, like the energy cost for Electic Vehicle (EV) users, when it relies on a smart dynamic pricing. This means that the cost perceived by an individual choosing one commodity or route, depends on the choices of all players and not only on the one choosing the same commodity. Non-separability property is not present in standard congestion games where cost is only due to congestion. Few papers deal with congestion games with non-separable cost functions because of the non-existence of a Beckmann function \cite{Beckmann56}, except in particular cases~\cite{Sohet20}. Most of them focus on the Price of Anarchy, which is a measure of the performance induced by a     decentralized system compared to centralized optimization. Learning techniques in congestion games are not very studied because it mainly considers non-atomic games, i.e. a population of players like an infinite number of decision makers. In \cite{Barth09}, the authors adapt the reinforcement learning algorithm (RLA) described in \cite{SASTRY94} to atomic congestion games, i.e. with a finite number of decision makers.
In their adaptation of the RLA, costs functions are separable. In this paper, this assumption is relaxed and non-separable costs are considered, such as energy costs in smart charging games.
\tcr{In~\cite{Bournez09}, it is shown that an asynchronous RLA converges for ordinal potential games.}

Smart charging EV is a particularly interesting environment to implement online learning algorithms to deal with different degrees
of uncertainty and randomness of future knowledge. Most papers related to smart charging consider machine learning techniques from a centralized point of view \cite{Tang16}. Deep learning techniques are suggested in \cite{Qian20} to study EV charging navigation and to minimize the total travel time and the charging cost at charging stations, with also a centralized point of view. In~\cite{Wang17}, the NE of the constrained energy trading game among players with incomplete information is found using a RLA. This game however does not consider a smart charging context. In our paper, we adapt RLA techniques to a finite smart charging game and show convergence to a pure NE. The contributions of this paper are the following:
 \vspace{-1mm}
\begin{itemize}
    \item \tcr{Proof of potentiality of} an atomic congestion game with non-separable cost functions;
    \item Proof of RLA convergence for this game;
    \item Application to a smart charging game and numerical example.
\end{itemize}

The paper is organized as follows: finite congestion games with non-separable costs are introduced in Section \ref{model} and an action-dependent continuous potential is defined. This leads to convergence results of RLA in such games in Section \ref{RLA}. This methodology is applied to a smart charging game described in Section \ref{Smart} in which numerical illustrations show its convergence. Finally Section \ref{conc} concludes the paper.

\section{Potential functions in finite congestion games with non-separable costs}
\label{model}
In the atomic game $\mathcal{G}$ considered in this work, there are $N$ players.
Each player $i\in \mathcal{N}\triangleq\{1,\dots,N\}$ chooses her resource $r_i$ among the same set $\mathcal{R}$.
This set $\mathcal{R}$ of pure strategies is made of $M$ resources, so that this game is finite.
Note that the following study applies directly to the case where each player $i$ chooses a set $\mathcal{R}_i \subset \mathcal{R}$ of resources.
In a classical congestion game~\cite{Rosenthal73}, a player $i$ choosing resource $r_i$ would have to pay for a cost $c_{r_i}\left(n_{r_i}(\bm{r})\right)$ which depends on the number of players having chosen the same resource, $n_{r_i}(\bm{r})$, defined as:
\vspace{-1mm}
\begin{equation}
    \forall a\in\mathcal{R}\,,\quad n_a\left(\bm{r}\right) \triangleq \#\left\{i\in\mathcal{N} ~|~ r_i=a\right\}\,,
\end{equation}
with $\bm{r} \triangleq \left(r_1, \dots, r_N\right) =\left(r_i, \bm{r}_{-i}\right)$ the vector of actions and $\bm{r}_{-i}$ the vector of actions of the players other than $i$. 
Here, the costs are non-separable~\cite{Chau03}, meaning that the cost $c_a$ of a resource $a$ does not only depend on $n_a$, but also on the number of players $n_b$ choosing other resources $b\in \mathcal{R}\smallsetminus\{a\}$. 
More precisely, \tcr{we can prove the existence of a potential function for specific non-separable costs:}

\vspace{-1mm}
\begin{definition}
Linearly non-separable congestion costs $c_a$, for all resources $a\in\mathcal{R}$ and vectors of actions $\bm{r}\in\mathcal{R}^N$, are defined as:
\vspace{-3mm}
\begin{equation}
    c_a(\bm{r}) \triangleq \alpha_a \lambda\big(L\left(\bm{r}\right)\big)\,,
    \label{eq:nonsep_costs}
\end{equation}
\vspace{-4mm}
\begin{equation*}
\text{with:}\quad
    \begin{cases}
    \alpha_a \geq 0\,, \text{ constant,}\\
    \lambda : \mathbb{R}_+\rightarrow \mathbb{R}_+ \,, \text{ an increasing function,}\\
    L\left(\bm{r}\right) \triangleq \tcr{\sum_{i=1}^N \alpha_{r_i} =} \sum_{b\in \mathcal{R}} \alpha_b n_b(\bm{r}).
    \end{cases}
\end{equation*}
\label{def:nonsep}
\vspace{-3mm}
\end{definition}
Note that \tcr{we chose the term ``linearly" because of} function $L(\cdot)$, which is a linear combination of the number of players $n_a(\cdot)$ choosing each resource $a$, while $\lambda$ may be non-linear.
\vspace{-1mm}
\begin{rmq}
Here, the game is supposed symmetric (between the players), meaning that the cost of a player $i$ only depends on her choice, and not on the player herself.
However, note that the following study can be extended to a non-symmetric case where player $i$ gets the cost $c_{i,r_i} \triangleq \alpha_{i,r_i}\lambda\left(\sum_{j=1}^N \alpha_{j,r_j}\right)$, which is the same as~\eqref{eq:nonsep_costs} when $\alpha_{i,a} = \alpha_a$ for all $i,a$.
\label{rmq:hetero}
\vspace{-1mm}
\end{rmq}

In this particular context of atomic congestion game with non-separable cost functions, we are able to find potential properties for the game, which is a powerful tool for the study of NE in pure strategy and convergence of learning procedures~\cite{Bournez09}.

\subsection{Potential function of pure strategies}
\label{sec:pure}

Following Definition~\ref{def:nonsep} of linearly non-separable congestion costs $\left(c_a\right)_{a\in\mathcal{R}}$, it is possible to extend the ordinal potential property of separable congestion games~\cite{Milchtaich96} to the non-separable game $\mathcal{G}\triangleq\left(\mathcal{N}, \mathcal{R}^N, \left(c_a\right)\right)$.

\vspace{-1mm}
\begin{definition}
An ordinal potential for game $\mathcal{G}$ is a function $P:\mathcal{R}^N\rightarrow \mathbb{R}$ verifying $\forall i\in\mathcal{N}, \forall \bm{r}_{-i}\in\mathcal{R}^{N-1}, \forall a,b \in \mathcal{R}$,
\begin{equation}
    c_{a}\left(a, \bm{r}_{-i}\right) < c_{b}\left(b, \bm{r}_{-i}\right)
    \Leftrightarrow P\left(a,\bm{r}_{-i}\right) < P\left(b,\bm{r}_{-i}\right)\,.
\end{equation}
\end{definition}

This definition follows the idea that an ordinal potential function follows the sign of the difference of cost for any player that changes her action unilaterally. Even if our game $\mathcal{G}$ is not a standard congestion game due to the non-separability of costs functions, it is possible to show the existence of an ordinal potential function. 

\vspace{-1mm}
\begin{prop}
The finite non-cooperative congestion game $\mathcal{G}$ with non-separable costs defined in~\eqref{eq:nonsep_costs} has the following ordinal potential function:
\vspace{-1mm}
\begin{equation}
    \forall \bm{r}\in\mathcal{R}^N\,, \quad P(\bm{r}) \triangleq \lambda\left(L\left(\bm{r}\right)\right)\,.
\end{equation}
\label{prop:pot}
\vspace{-6mm}
\end{prop}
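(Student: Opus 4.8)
The plan is to fix a player $i\in\mathcal{N}$ and an action profile $\bm{r}_{-i}\in\mathcal{R}^{N-1}$ of the others, and to reduce the claimed equivalence to a one-dimensional comparison. The key structural feature to exploit is the \emph{linearity} of $L$: writing $S\triangleq\sum_{j\neq i}\alpha_{r_j}\geq 0$ for the load contributed by the other players, one has $L(a,\bm{r}_{-i})=S+\alpha_a$ for every resource $a\in\mathcal{R}$. Substituting this into~\eqref{eq:nonsep_costs} and into the definition of $P$ yields the compact expressions
\begin{equation*}
c_a(a,\bm{r}_{-i}) = \alpha_a\,\lambda(S+\alpha_a)\,,\qquad P(a,\bm{r}_{-i}) = \lambda(S+\alpha_a)\,,
\end{equation*}
so the entire statement becomes a comparison between two resources $a,b$ through the single scalars $\alpha_a$ and $\alpha_b$ at a fixed offset $S$.

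First I would show that \emph{each} side of the required equivalence is itself equivalent to the ordering $\alpha_a<\alpha_b$. On the potential side this is immediate: since $\lambda$ is (strictly) increasing and $S+\alpha_a<S+\alpha_b \Leftrightarrow \alpha_a<\alpha_b$, we get $P(a,\bm{r}_{-i})<P(b,\bm{r}_{-i}) \Leftrightarrow \alpha_a<\alpha_b$. On the cost side, the claim $c_a(a,\bm{r}_{-i})<c_b(b,\bm{r}_{-i}) \Leftrightarrow \alpha_a<\alpha_b$ amounts to the strict monotonicity of the auxiliary map $g(x)\triangleq x\,\lambda(S+x)$ on $\mathbb{R}_+$. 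Chaining the two equivalences through the common condition $\alpha_a<\alpha_b$ then delivers the ordinal potential property.

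The main obstacle is therefore establishing that $g$ is strictly increasing, and this is where the interaction between the multiplicative prefactor $\alpha_a$ and the argument of $\lambda$ must be handled with care. For $0\leq x_1<x_2$ I would use the decomposition
\begin{equation*}
g(x_2)-g(x_1) = (x_2-x_1)\,\lambda(S+x_2) + x_1\big(\lambda(S+x_2)-\lambda(S+x_1)\big)\,.
\end{equation*}
Both summands are nonnegative ($\lambda\geq 0$ and $\lambda$ increasing), and the first is strictly positive because $\lambda(S+x_2)>0$ whenever $x_2>0$: a nonnegative, strictly increasing $\lambda$ must be strictly positive at every positive argument, and $S+x_2>0$. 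Intuitively, lowering $\alpha_a$ decreases both the prefactor and the load $L$, so cost and potential move in the same direction, which is exactly what makes $\lambda(L)$ an ordinal potential. I would also flag explicitly that strict monotonicity of $\lambda$ is essential here: if $\lambda$ were merely nondecreasing and flat on the relevant range (e.g.\ constant), a player could strictly lower her cost by switching to a resource with smaller $\alpha$ while leaving $P$ unchanged, breaking the equivalence.
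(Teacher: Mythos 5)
Your proof is correct and follows essentially the same route as the paper: both reduce the two sides of the ordinal-potential equivalence to the single condition $\alpha_a<\alpha_b$, using strict monotonicity of $\lambda$ for the potential side and strict monotonicity of the auxiliary map $\alpha\mapsto\alpha\,\lambda\big(\alpha+\sum_{j\neq i}\alpha_{r_j}\big)$ for the cost side. Your explicit decomposition of $g(x_2)-g(x_1)$ and the remark that strictness of $\lambda$ is indispensable merely make rigorous what the paper asserts in one line (``increasing as a product of positive increasing functions''), so no substantive difference remains.
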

\begin{proof}
Let $i\in\mathcal{N}$ be any player and $a,b\in\mathcal{R},\bm{r}_{-i}\in\mathcal{R}^{N-1}$ any actions.

Firstly, note that $L\left(a,\bm{r}_{-i}\right)=L\left(b,\bm{r}_{-i}\right) + \alpha_a - \alpha_b$, by definition.
Then, as function $\lambda$ is increasing:
\begin{equation*}
\begin{aligned}
 \lambda\big(L\left(a,\bm{r}_{-i}\right)\big) &< \lambda\big(L\left(b,\bm{r}_{-i}\right)\big)\hspace{-4mm}
&&\iff
\alpha_a - \alpha_b<0 \,,\\
    \text{i.e.}\qquad P\left(a,\bm{r}_{-i}\right) &< P\left(b,\bm{r}_{-i}\right)
    &&\iff
    \alpha_a < \alpha_b\,,
    \end{aligned}
\end{equation*}
by definition of $P = \lambda \circ L$.

Secondly, function $C_{\bm{r}} : \alpha \mapsto \alpha \times \lambda\big(\alpha+\sum_{j\neq i}\alpha_{r_j}\big)$ is increasing on $\mathbb{R}_+$ as a product of positive increasing functions, meaning that:
\begin{equation*}
\begin{aligned}
 \alpha_a \lambda\Big(\alpha_a+\sum_{j\neq i}\alpha_{r_j}\Big) &< \alpha_b \lambda\Big(\alpha_b+\sum_{j\neq i}\alpha_{r_j}\Big) \hspace{-4mm}
&&\iff
\alpha_a < \alpha_b \,,\\
\text{i.e.} \qquad 
c_{a}\left(a, \bm{r}_{-i}\right) &< c_{b}\left(b, \bm{r}_{-i}\right)
    &&\iff \alpha_a < \alpha_b\,.
    \end{aligned}
\end{equation*}
\vspace{-2mm}
\vspace{-1mm}
\end{proof}

\noindent
The existence of a potential implies the existence of a pure NE in such non-cooperative games, which is defined here:


\vspace{-1mm}
\begin{definition}
A pure strategy Nash Equilibrium (NE) of game $\mathcal{G}$ is a vector of actions $\bm{r}^*\in\mathcal{R}^N$ which verifies:
\vspace{-1mm}
\begin{equation}
    \forall i\in\mathcal{N}\,,
    \forall a\in\mathcal{R}\,, \quad
    c_{r^*_i}\left(r^*_i, \bm{r}^*_{-i}  \right) \leq c_a \left(a, \bm{r}^*_{-i}  \right)\,.
\end{equation}
\end{definition}
\noindent
In other words, a NE is a strategy vector such that no player can reduce her cost by changing her strategy unilaterally. Note that the existence of pure NE is not a standard result, but it is true for games with an ordinal potential function. Indeed, as the sets of actions are compact, the minimum of the potential exists and corresponds to a pure NE of the game~\cite{Monderer96}. In this particular non-separable atomic congestion game, pure NE can be fully characterized. Let us define the set of resources $\mathcal{R}^+\triangleq\{a\in\mathcal{R}~|~\alpha_a > \min_{s\in\mathcal{R}}\left(\alpha_s\right)\}$.
\begin{prop}
The NE of $\mathcal{G}$ are the $\bm{r}^*\in\mathcal{R}^N$ such that:
\vspace{-1mm}
\begin{equation}
    \forall a\in\mathcal{R}^+\,,\quad n_a\left(\bm{r}^*\right) = 0\,.
    \label{eq:NE}
\end{equation}
\end{prop}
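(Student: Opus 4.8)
The plan is to show both inclusions by reducing every cost comparison to a comparison of the constants $\alpha_a$, exactly as in the proof of Proposition~\ref{prop:pot}. The crucial fact established there is that, for a fixed player $i$ and fixed $\bm{r}_{-i}$, the map $C_{\bm{r}}: \alpha \mapsto \alpha\,\lambda\big(\alpha + \sum_{j\neq i}\alpha_{r_j}\big)$ is strictly increasing, so that $c_a(a,\bm{r}_{-i}) < c_b(b,\bm{r}_{-i}) \iff \alpha_a < \alpha_b$. Since this holds for every pair $a,b$ and the family $(\alpha_s)_{s\in\mathcal{R}}$ is totally ordered, the same equivalence passes to the weak inequalities: $c_a(a,\bm{r}_{-i}) \leq c_b(b,\bm{r}_{-i}) \iff \alpha_a \leq \alpha_b$. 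This weak version is the precise statement I will feed into the Nash condition.

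First I would prove the ``only if'' direction. Let $\bm{r}^*$ be a NE and fix any player $i$. The equilibrium inequality $c_{r^*_i}(r^*_i,\bm{r}^*_{-i}) \leq c_a(a,\bm{r}^*_{-i})$ holds for every $a\in\mathcal{R}$; applying the weak equivalence above gives $\alpha_{r^*_i} \leq \alpha_a$ for all $a$, hence $\alpha_{r^*_i} = \min_{s\in\mathcal{R}}\alpha_s$. By definition of $\mathcal{R}^+$ this forces $r^*_i \notin \mathcal{R}^+$. As $i$ was arbitrary, no player selects a resource of $\mathcal{R}^+$, which is exactly $n_a(\bm{r}^*)=0$ for all $a\in\mathcal{R}^+$.

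Conversely, I would take any $\bm{r}^*$ with $n_a(\bm{r}^*)=0$ for every $a\in\mathcal{R}^+$ and check the Nash condition directly. The hypothesis means every player $i$ plays some $r^*_i\notin\mathcal{R}^+$, i.e.\ $\alpha_{r^*_i}=\min_{s\in\mathcal{R}}\alpha_s$. Then for any candidate deviation $a\in\mathcal{R}$ we have $\alpha_{r^*_i}\leq\alpha_a$, so the weak equivalence yields $c_{r^*_i}(r^*_i,\bm{r}^*_{-i}) \leq c_a(a,\bm{r}^*_{-i})$; thus no unilateral deviation is profitable and $\bm{r}^*$ is a NE.

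The only delicate point, and the one I would treat most carefully, is the upgrade from the strict equivalence of Proposition~\ref{prop:pot} to the weak one: this relies on $C_{\bm{r}}$ being \emph{strictly} increasing, so that ties in cost correspond exactly to ties in the $\alpha$'s and no spurious equalities appear (for instance when some $\alpha_a=0$ or when $\lambda$ is locally constant). Once this monotonicity is secured, the characterization is immediate, and it also makes transparent the intuitive reading of the result: a pure NE is simply any profile in which every player crowds onto a least-weighted resource.
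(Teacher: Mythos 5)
Your proposal is correct and follows essentially the same route as the paper: both reduce every cost comparison to a comparison of the coefficients $\alpha_a$ via the monotonicity of the map $C_{\bm{r}}:\alpha\mapsto\alpha\,\lambda\big(\alpha+\sum_{j\neq i}\alpha_{r_j}\big)$ introduced in the proof of Proposition~\ref{prop:pot} (the paper handles the ``NE implies the condition'' direction by exhibiting a profitable deviation rather than arguing directly, which is an immaterial difference). Your remark that strict monotonicity of $C_{\bm{r}}$ deserves care (e.g.\ when some $\alpha_a=0$ or $\lambda$ is locally constant) is a fair caveat, but it applies equally to the paper's own argument and does not change the comparison.
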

\begin{proof}
First, let $\bm{r}^*\in\mathcal{R}^N$ verify~\eqref{eq:NE}.
Then, for any player $i\in\mathcal{N}$ and any resource $a\in\mathcal{R}$,
$\alpha_{r^*_i}=\min_{s\in\mathcal{R}}\left(\alpha_s\right)$, by definition of $\bm{r}^*$.
Thus, $\alpha_{r^*_i} \leq \alpha_a$ and $C_{\bm{r}^*}\left(\alpha_{r^*_i}\right)\leq C_{\bm{r}^*}\left(\alpha_a\right)$, with increasing function $C_{\bm{r}^*} : \alpha \mapsto \alpha \times \lambda\left(\alpha+\sum_{j\neq i}\alpha_{r^*_j}\right)$ defined in proof of Prop.~\ref{prop:pot}.
This is equivalent to $c_{r^*_i}\big(r^*_i, \bm{r}^*_{-i}\big)\leq c_a\left(a,\bm{r}^*_{-i}\right)$.

Secondly, let $\bm{r}^*\in\mathcal{R}^N$ be such that there exists $a\in\mathcal{R}^+$ with $n_a(\bm{r}^*)\geq 1$.
Let $i$ be one of the players having chosen resource $a$.
Let $b\in\mathcal{R}$ be such that $\alpha_b = \min_{s\in\mathcal{R}}(\alpha_s)$.
Then, $\alpha_a > \alpha_b$ by definition of $\mathcal{R}^+$, and $c_{r^*_i}\left(r^*_i, \bm{r}^*_{-i}\right) > c_b\left(b,\bm{r}^*_{-i}\right)$ using again function $C_{\bm{r}^*}$.
\vspace{-2mm}
\vspace{1mm}
\end{proof}
\noindent
This proposition shows that in games following Definition~\ref{def:nonsep}, pure NE correspond to situations where all players choose the resource $a$ with the lowest coefficient $\alpha_a$.

This parameter may not be known by players in advance, hence the need of learning algorithms by players in order to optimally adapt their actions.
Such RLA are fully decentralized and are based on updates, for each player, of probability distributions over pure strategies (called mixed strategies).
It is shown in~\cite{SASTRY94} that there is a link between the potential function in pure strategies and the one in mixed strategies for common payoff games.
This result is extended to more general games and is fundamental in order to prove the convergence of RLA to pure NE.

\subsection{Action-dependent continuous potential}
Let $\pi_{i,a}$ denote the probability with which player $i$ chooses pure strategy $a \in \mathcal{R}$, $\bm{\pi}_i \in \Delta_i$ the mixed strategy vector of player $i$ in a simplex $\Delta_i$ of $\mathbb{R}^M$ and $\bm{\pi}\in\Delta = \prod_i \Delta_i$ the mixed strategies of all players.
The mixed strategy notation of player $i$ playing pure strategy $a$ is $\bm{\pi}_i=\bm{e}_a$, with $\bm{e}_a$ the null vector except for the $a$-th component, equal to 1.
The expected cost $\overline{c_{i,a}}$ for player $i$ playing pure strategy $a$ is:
\begin{equation}
\label{eq:mixed}
\begin{aligned}
     \overline{c_{i,a}}(\bm{\pi}) \triangleq \mathbb{E}_{\bm{\pi}}\big[c_a | \pi_{i}=\bm{e}_a\big]
     = \sum_{\bm{r}_{-i}} \Big( c_a(a, \bm{r}_{-i})\prod_{j\neq i} \pi_{j,r_j}\Big)\,,
\end{aligned}
\end{equation}
with $c_a$ the linearly non-separable congestion cost of~\eqref{eq:nonsep_costs}.

Considering mixed strategies, the strategy sets are topological spaces and the expected cost functions given in (\ref{eq:mixed}) are continuously differentiable.
Moreover in such continuous games, there may exist continuous potential functions, defined in \cite{Monderer96}: the gradient of these functions correspond to the expected costs.
This type of potential is widely considered in population games \cite{Sandholm01}, as it serves as a Lyapunov function for strategies' dynamics, or in games with non-atomic players~\cite{Cheung18}. In our particular setting, a generalization of these potential functions is needed, and defined as follows:

\vspace{-1mm}
\begin{definition}
An action-dependent continuous potential is a $\mathcal{C}^1$ function $F$ over mixed strategies such that, for all resources $a \in \mathcal{R}$, there exists a constant $\gamma_a$ verifying: 
\begin{equation} \forall i,\quad
    \frac{\partial F}{\partial \pi_{i,a}} (\bm{\pi}) = \gamma_a \overline{c_{i,a}}(\bm{\pi})\,.
    \label{eq:pot_continu}
\end{equation}
\end{definition}
\noindent
Note that as expected cost are $\mathcal{C}^1$ functions, such potentials $F$ are then $\mathcal{C}^2$ functions.
Continuous potential functions verify~\eqref{eq:pot_continu} with $\gamma_a=1$ for all resources $a$.
Our atomic game $\mathcal{G}$ considering continuous strategy sets of mixed strategies has an action-dependent continuous potential function. In fact, this function is the conditional expectation of the ordinal potential function $P$ when players choose pure strategies according to the mixed strategy vector $\bm{\pi}$.


\vspace{-1mm}
\begin{prop}
Atomic games $\mathcal{G}$ with linearly non-separable congestion costs have the following action-dependant continuous potential function (associated to $\gamma_a = \frac{1}{\alpha_a}, \forall a$):
\vspace{-1mm}
\begin{equation}
    F(\bm{\pi}) \triangleq \mathbb{E}_{\bm{\pi}}\left[P\right]\,,
    \label{def:adpotential}
\end{equation}
with $P$ the ordinal potential of $\mathcal{G}$.

\label{cor:lambda_pot_continu}
\end{prop}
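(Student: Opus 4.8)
The plan is to compute the partial derivatives of $F$ directly from its definition as an expectation and to match them against the expected costs $\overline{c_{i,a}}$ of~\eqref{eq:mixed}. First I would expand
\begin{equation*}
F(\bm{\pi}) = \mathbb{E}_{\bm{\pi}}[P] = \sum_{\bm{r}\in\mathcal{R}^N} P(\bm{r}) \prod_{j=1}^N \pi_{j,r_j}\,,
\end{equation*}
using the ordinal potential $P = \lambda\circ L$ from Proposition~\ref{prop:pot}. The crucial structural observation is that each player's mixed strategy appears to degree exactly one in every monomial: the factor $\prod_{j=1}^N \pi_{j,r_j}$ contains the single coordinate $\pi_{i,r_i}$ from player $i$'s simplex. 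Hence $F$ is multilinear, that is, affine in the block $(\pi_{i,a})_{a\in\mathcal{R}}$ once the other players' probabilities are frozen.

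Because of this multilinearity, differentiating with respect to $\pi_{i,a}$ (in the polynomial extension of $F$) selects exactly the monomials in which player $i$ plays $a$ and strips off the factor $\pi_{i,a}$:
\begin{equation*}
\frac{\partial F}{\partial \pi_{i,a}}(\bm{\pi}) = \sum_{\bm{r}_{-i}} P(a,\bm{r}_{-i}) \prod_{j\neq i}\pi_{j,r_j} = \mathbb{E}_{\bm{\pi}}\big[P \mid \pi_i = \bm{e}_a\big]\,.
\end{equation*}
This is precisely the conditional expectation of $P$ given that player $i$ plays resource $a$, which explains the informal statement preceding the proposition.

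The final step relates $P(a,\bm{r}_{-i})$ to the cost $c_a(a,\bm{r}_{-i})$. By Definition~\ref{def:nonsep} together with $P = \lambda\circ L$, one has $c_a(\bm{r}) = \alpha_a\lambda(L(\bm{r})) = \alpha_a P(\bm{r})$, hence $P(a,\bm{r}_{-i}) = \frac{1}{\alpha_a}c_a(a,\bm{r}_{-i})$. Substituting this into the partial derivative and comparing with~\eqref{eq:mixed} gives $\frac{\partial F}{\partial \pi_{i,a}}(\bm{\pi}) = \frac{1}{\alpha_a}\overline{c_{i,a}}(\bm{\pi})$, which is exactly~\eqref{eq:pot_continu} with $\gamma_a = \frac{1}{\alpha_a}$. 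Since the coefficients $P(\bm{r})$ are constants, $F$ is a polynomial in $\bm{\pi}$ and therefore $\mathcal{C}^1$ (indeed $\mathcal{C}^\infty$), so the regularity requirement is immediate.

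I expect the only genuine subtlety to be the meaning of the partial derivative on the constrained domain $\Delta = \prod_i \Delta_i$, where the coordinates $(\pi_{i,a})_a$ are not free but constrained to sum to one. The clean resolution is that the multilinear polynomial $F$ extends smoothly to the ambient space $\mathbb{R}^{NM}$, and the partials in~\eqref{eq:pot_continu} are understood as those of this extension; the expression~\eqref{eq:mixed} for $\overline{c_{i,a}}$ is consistent with this convention. One should also record the implicit hypothesis $\alpha_a > 0$ (strictly), without which $\gamma_a = 1/\alpha_a$ is undefined and the claimed identity fails for resources with $\alpha_a = 0$.
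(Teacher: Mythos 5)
Your proof is correct and follows essentially the same route as the paper's: both decompose $F$ by conditioning on player $i$'s action (the paper via the law of total expectation, you via the explicit multilinear expansion), substitute $c_a = \alpha_a P$, and differentiate to obtain $\partial F/\partial \pi_{i,a} = \tfrac{1}{\alpha_a}\overline{c_{i,a}}$. Your closing remark that one needs $\alpha_a > 0$ strictly (the paper's Definition~\ref{def:nonsep} only requires $\alpha_a \geq 0$) is a valid observation the paper glosses over.
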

\vspace{-1mm}
\begin{proof}
By linearity of the expected value $\mathbb{E}_{\bm{\pi}}\left[P\right]$:
\begin{equation*}
\vspace{-1mm}
\begin{aligned}
    F(\bm{\pi})&=\sum_{i,a} \pi_{i,a} \mathbb{E}_{\bm{\pi}}\left[P~|~\bm{\pi}_i = \bm{e}_a\right]\\
\vspace{-3mm}
    &= \sum_{i,a} \pi_{i,a} \frac{1}{\alpha_a} \mathbb{E}_{\bm{\pi}}\big[c_a ~|~ \pi_{i}=\bm{e}_a\big]\,,
\end{aligned}
\end{equation*}
using $c_a(\cdot) = \alpha_a \lambda(L(\cdot)) = \alpha_a P(\cdot)$.
Then, ~\eqref{eq:pot_continu} is found by differentiating by $\pi_{i,a}$, with $\gamma_a = \frac{1}{\alpha_a}$ ($\forall i,a$).
\vspace{-2mm}
\vspace{2mm}
\end{proof}


The previous proposition generalizes the particular case studied in \cite{SASTRY94} for games with common payoff, while a similar result is obtained in \cite{Bournez09} for continuous potential games.
The following proposition gives a more precise result and shows that only games with particular cost functions admit an action-dependent continuous potential function.

\begin{prop}
Finite games that admit a $\mathcal{C}^2$ action-dependant continuous potential function correspond to finite games with cost functions defined as:
\begin{equation}
    \forall i,a,\quad
    c_{i,a}(\bm{r}) \triangleq \beta_a \mu(\bm{r})\,,
    \label{eq:common}
\end{equation}
with $\beta_a$ any constant which depends on the action $a$, and $\mu$ any function of pure strategies (not necessarily increasing or linearly non-separable). 
\end{prop}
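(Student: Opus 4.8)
The plan is to establish the stated correspondence as an equivalence and prove the two directions separately: first the easy sufficiency (cost functions of the form~\eqref{eq:common} do admit such a potential), then the substantive necessity (these are the \emph{only} ones that do).

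For sufficiency I would simply replay the proof of Proposition~\ref{cor:lambda_pot_continu}. Given $c_{i,a}(\bm r)=\beta_a\mu(\bm r)$, set $F(\bm\pi)\triangleq\mathbb{E}_{\bm\pi}[\mu]=\sum_{\bm r}\mu(\bm r)\prod_j\pi_{j,r_j}$, which is multilinear in the variables $\pi_{i,a}$ and hence $\mathcal C^\infty$, in particular $\mathcal C^2$. Differentiating gives $\frac{\partial F}{\partial\pi_{i,a}}=\sum_{\bm r_{-i}}\mu(a,\bm r_{-i})\prod_{j\neq i}\pi_{j,r_j}$, whereas $\overline{c_{i,a}}(\bm\pi)=\beta_a\sum_{\bm r_{-i}}\mu(a,\bm r_{-i})\prod_{j\neq i}\pi_{j,r_j}$, so~\eqref{eq:pot_continu} holds with the single constant $\gamma_a=1/\beta_a$, independent of $i$ as required.

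For necessity, let $F$ be a $\mathcal C^2$ action-dependent potential, so $\frac{\partial F}{\partial\pi_{i,a}}=\gamma_a\overline{c_{i,a}}$. The key is that $\overline{c_{i,a}}$ depends only on the strategies of players other than $i$, which I would exploit through equality of mixed second partials (Schwarz's theorem, valid since $F\in\mathcal C^2$): for any two distinct players $i\neq j$ and actions $a,b$,
\[
\gamma_a\,\frac{\partial\overline{c_{i,a}}}{\partial\pi_{j,b}}
=\frac{\partial^2 F}{\partial\pi_{j,b}\,\partial\pi_{i,a}}
=\gamma_b\,\frac{\partial\overline{c_{j,b}}}{\partial\pi_{i,a}}.
\]
Computing both partials from the non-symmetric form of~\eqref{eq:mixed}, each side is a multilinear expression in the remaining variables $(\pi_{k,\cdot})_{k\neq i,j}$ whose coefficient of the monomial $\prod_{k\neq i,j}\pi_{k,r_k}$ is $\gamma_a c_{i,a}(\bm s)$ on the left and $\gamma_b c_{j,b}(\bm s)$ on the right, where $\bm s$ is the profile with $r_i=a$, $r_j=b$ and the other coordinates $(r_k)_{k\neq i,j}$. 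Identifying coefficients therefore yields, for every profile $\bm r$ and every pair $i\neq j$, the relation $\gamma_{r_i}\,c_{i,r_i}(\bm r)=\gamma_{r_j}\,c_{j,r_j}(\bm r)$. This says that $\gamma_{r_i}c_{i,r_i}(\bm r)$ is \emph{independent of the player} $i$; calling this common value $\mu(\bm r)$ and setting $\beta_a\triangleq 1/\gamma_a$ gives exactly $c_{i,a}(\bm r)=\beta_a\mu(\bm r)$, i.e.~\eqref{eq:common}.

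I expect the main obstacle to be the rigorous passage from the functional identity of the two partial derivatives to the pointwise relation $\gamma_a c_{i,a}=\gamma_b c_{j,b}$. Since mixed strategies live on a product of simplices, the constraints $\sum_a\pi_{k,a}=1$ make the monomials linearly dependent, so one must take care that the partial derivatives are well defined and that the coefficient identification is legitimate; the clean fix is to evaluate the continuous identity at pure-strategy vertices $\bm e_{r_k}$, which isolates a single monomial, after treating $F$ as its natural multilinear polynomial extension to $\mathbb{R}^{NM}$, exactly as already done in Proposition~\ref{cor:lambda_pot_continu}. Two minor points would also need a word: the argument requires $N\ge 2$ so that a second player $j$ exists to generate the integrability condition, and the degenerate case $\gamma_a=0$ (where $\beta_a=1/\gamma_a$ is undefined) should be read as $\overline{c_{i,a}}\equiv 0$ and treated separately.
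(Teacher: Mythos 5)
Your proof follows essentially the same route as the paper's: Schwarz's theorem applied to the $\mathcal{C}^2$ potential gives the cross-derivative identity $\gamma_a\,\partial\overline{c_{i,a}}/\partial\pi_{j,b}=\gamma_b\,\partial\overline{c_{j,b}}/\partial\pi_{i,a}$, evaluation at pure-strategy vertices isolates the pointwise relation $\gamma_a c_{i,a}=\gamma_b c_{j,b}$, and the converse is the expectation construction already used for Proposition~\ref{cor:lambda_pot_continu}. Your added care (simplex constraints, $N\ge 2$, the degenerate case $\gamma_a=0$) goes beyond the paper, and your constant $\gamma_a=1/\beta_a$ in the sufficiency direction is the correct one --- the paper's proof writes $\gamma_a=\beta_a$, which is a slip.
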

\begin{proof}
Let $F$ be a $\mathcal{C}^2$ perturbed continuous potential function, associated to constants $\gamma_a$ ($\forall a$).
Then, by Definition~\eqref{eq:pot_continu} of $F$ and according to Clairaut-Schwarz theorem (symmetry of second derivatives):
\begin{equation*}
    \forall i,j,a,b, \quad \gamma_a\frac{\partial \overline{c_{i,a}}}{\partial \pi_{j,b}} =  \gamma_b\frac{\partial \overline{c_{j,b}}}{\partial \pi_{i,a}}
     \,,
\end{equation*}
which, using~\eqref{eq:mixed}, leads to (for all mixed strategies $\bm{\pi}$):
\begin{equation*}
    \sum_{\bm{r}_{-ij}} 
    \Big(\prod_{k\neq i,j}\pi_{k,r_k} \big[\gamma_a c_{i,a}-\gamma_b c_{j,b}\big](a,b,\bm{r}_{-ij})\Big)
    =0\,.
\end{equation*}
For all pure strategies $\bm{r}_{-ij}\in\mathcal{R}^{N-2}$, last equation considered with $\bm{\pi}_k = \bm{e}_{r_k}$ ($\forall k\neq i,j$) becomes:
\begin{equation*}
   \gamma_a c_{i,a}(a,b,\bm{r}_{-ij})=\gamma_b c_{j,b}(a,b,\bm{r}_{-ij})\,.
\end{equation*}
Let $\mu = c_{i,a}$ and $\beta_b = \frac{\gamma_a}{\gamma_b}$ ($\forall b$).
Then,~\eqref{eq:common} is true.

Inversely, suppose a game with cost functions verifying~\eqref{eq:common}.
Then, $\overline{c_{i,a}}=\beta_a \mathbb{E}_{\bm{\pi}}\left[\mu ~|~ \pi_{i}=\bm{e}_a\right]$, as seen in~\eqref{eq:mixed}.
Let $F(\bm{\pi}) = \mathbb{E}_{\bm{\pi}}\left[\mu\right]$.
By linearity of the expected value, $F=\sum_{i,a} \pi_{i,a}. \mathbb{E}_{\bm{\pi}}\left[\mu~|~\bm{\pi}_i = \bm{e}_a\right]$.
Therefore,~\eqref{eq:pot_continu} is verified, with $\gamma_a = \beta_a$ ($\forall a$).
\end{proof}
\noindent
This type of games is a generalization of common payoff games with action-dependent cost.

The property of having an action-dependent continuous potential leads to convergence of simple RLA for which only local information is accessible for each player (basically her own perceived cost) in order to update her mixed strategy vector and find her best action. In fact, in most cases, players are not even aware that they are involved in a game with other players and interact through their actions. That is why the framework of online learning in which players make repeated decisions with a priori unknown rules and outcomes is suitable. In the next section, a simple RLA is described, whose convergence has already been proven when the game has a continuous potential function. We prove that there is still convergence in a case of action-dependent continuous potential and non-separable congestion game as in our setting.

\section{Reinforcement learning algorithm}
\label{RLA}

In our framework, players possess incomplete information: their only knowledge is the observation of their cost after taking an action.
\tcr{Note that best response algorithms can also be applied to game $\mathcal{G}$, but players require additional information (the exact formulation of their own cost function).}
Here, game $\mathcal{G}$ will be repeated so that players learn what their best strategy is.
More precisely, every iteration $n$ is split into two phases.
In the first phase, each player $i$ chooses an action $r_i^{(n)}$ in accordance with her mixed strategy vector $\bm{\pi}_i^{(n)}$. Thus, a vector of actions $\bm{r}^{(n)}$ is induced by the decisions of all players, which in turn implies a cost for each player based on the cost functions defined in (\ref{eq:nonsep_costs}).
Then, in the second phase of the iteration, each player updates her strategy probability vector based on the unnoisy cost. This update mechanism is a reinforcement mechanism. This type of RLA, used in stochastic games, is a linear reward-inaction scheme~\cite{SASTRY94}. Each player $i$ updates her mixed strategy vector $\bm{\pi}_i$ as follows (for any iteration $n$):
\begin{equation}
\bm{\pi_i}^{(n+1)} = \bm{\pi_i}^{(n)} + \delta \times\left(1- \frac{c_{i,a}(\bm{r}^{(n)})}{c_{\max}}\right)\times\left(\bm{e}_a-\bm{\pi_i}^{(n)}\right)\,,
\label{eq:reinforcement}
\end{equation}
with:
\begin{itemize}
\item $0<\delta <1$ the learning parameter, fixed;
\item $a=r_i^{(n)}$ the action taken by player $i$ at iteration $n$;
\item $c_\text{max} \triangleq \max_{i,a,\bm{r}} c_{i,a}(\bm{r})$.
\end{itemize}
\vspace{1mm}
The basic idea of the updating rule expressed by equation~\eqref{eq:reinforcement} is to ensure that actions prompting small or high costs are promoted or not.
This update scheme is decentralized, and the global algorithm is fully distributed. This is an important property in order to deploy it in large scale complex systems, typically congestion games with a large number of players.
The global algorithm works as follows:
\vspace{3mm}

\begin{algorithm}[h]
\KwIn{$\bm{\pi}^{(0)}$, $n=0$}
    \While{not converged}
    {
    Actions $\bm{r}^{(n)}$ according to mixed strategies $\bm{\pi}^{(n)}$\;
    \For{all players $i$}
    {
    Cost $c_{i,a}\big(\bm{r}^{(n)}\big)$ at $a=r_i^{(n)}$ given by~\eqref{eq:nonsep_costs}\;
    Update mixed strategy of $i$ with~\eqref{eq:reinforcement}\;
    }
$n \leftarrow n+1$\;
    }
\caption{RLA with synchronous global updates}
\label{algo:Sastry}
\end{algorithm}
\vspace{3mm}
This algorithm converges for games having an action-dependent continuous potential function.
The updating mechanism~\eqref{eq:reinforcement} can be written as:
\begin{equation}
    \bm{\pi}^{(n+1)}=\bm{\pi}^{(n)} + \delta G\left(\bm{\pi}^{(n)}, \bm{r}^{(n)}\right)\,,
    \label{eq:sequence}
\end{equation}
with $G$ the updating function. Let us define function  $f(\bm{\pi})\triangleq\mathbb{E}_{\bm{\pi}}[G]$ and for any $\delta$ the function $\Pi_\delta : \mathbb{R}_+\rightarrow \mathbb{R}$ as the piecewise-constant interpolation of sequence $\left(\bm{\pi}^{(n)}\right)_n$ of mixed strategies of all players. A direct application of Theorem 3.1 in~\cite{SASTRY94} demonstrates that our learning algorithm~\ref{algo:Sastry} converges weakly, as $\delta$ tends to $0$, to the solution $\Pi$ of the following Ordinary Differential Equation (ODE):
\begin{equation}
    \frac{dX}{dt} = f(X)\,,\quad X(0) = \bm{\pi}^{(0)}\,.
    \label{eq:ODE}
\end{equation}
Considering $\Pi=\left(\pi_{i,a}\right)$ and $f(\Pi)=\left(f_{i,a}\right)$, the ODE~\eqref{eq:ODE} can be written element-wise as:
\begin{equation*}
\begin{aligned}
     \forall i,a,\quad \frac{d \pi_{i,a}}{d t} &= f_{i,a} = &&\pi_{i,a} (1-\pi_{i,a}) (1-\overline{c_{i,a}}/c_{\max})\\
     & &&+\sum_{b\neq a}\pi_{i,b}(-\pi_{i,a}) (1-\overline{c_{i,b}}/c_{\max})\\
     &= -\pi_{i,a} &&\sum_{b\neq a}\pi_b\left(\overline{c_{i,a}} - \overline{c_{i,b}}\right)/c_{\max}\,.
\end{aligned}
\end{equation*}
Thus, the convergence points of our algorithm are related to the solutions of a particular ODE, which must be characterized in order to prove the convergence of the RLA to NE of game $\mathcal{G}$. In fact, next proposition proves that having an action-dependent continuous potential implies convergence to pure NE of our RLA described in Algorithm~\ref{algo:Sastry}.
\begin{prop}
If a finite game $\mathcal{G}$ has an action-dependent continuous potential then, for any initial non-pure strategies $\bm{\pi}^{(0)}$, function $\Pi\triangleq\lim_{\delta\rightarrow 0}\Pi_\delta$ converges to a pure NE.

\label{prop:reinforcement}
\end{prop}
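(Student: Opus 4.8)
\emph{Plan.} By the stochastic-approximation theorem just invoked, the discrete scheme tracks, as $\delta\to0$, the solution $\Pi$ of the replicator-type ODE written element-wise above, which I rewrite as $\tfrac{d\pi_{i,a}}{dt}=f_{i,a}=\tfrac{1}{c_{\max}}\,\pi_{i,a}\big(\bar c_i-\overline{c_{i,a}}\big)$ with $\bar c_i\triangleq\sum_b\pi_{i,b}\,\overline{c_{i,b}}$. It therefore suffices to show that this ODE, started at a non-pure (interior) $\bm{\pi}^{(0)}$, converges to a pure NE. First I would record the qualitative features of the flow that drive everything: the product of simplices $\Delta$ and each face $\{\pi_{i,a}=0\}$ are forward invariant (so a support can shrink but never grow), and the vertices of $\Delta$ are exactly the pure strategy profiles.

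\textbf{Lyapunov step.} The natural candidate is the action-dependent continuous potential $F$ of Prop.~\ref{cor:lambda_pot_continu}, for which $\partial F/\partial\pi_{i,a}=\gamma_a\overline{c_{i,a}}$. Differentiating along the flow,
\[
\frac{dF}{dt}=\sum_{i,a}\gamma_a\,\overline{c_{i,a}}\,f_{i,a}=\frac{1}{c_{\max}}\sum_i\sum_a\gamma_a\,\pi_{i,a}\,\overline{c_{i,a}}\big(\bar c_i-\overline{c_{i,a}}\big).
\]
The aim is to show this is nonpositive and vanishes exactly at the rest points, so that LaSalle's invariance principle forces $\Pi(t)$ into the rest-point set of the ODE.

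\textbf{Main obstacle.} This is the delicate point. Only when the weights $\gamma_a$ coincide does the right-hand side collapse to $-\tfrac{1}{c_{\max}}\sum_i\mathrm{Var}_{\bm{\pi}_i}(\overline{c_{i,\cdot}})\le0$; with the genuinely action-dependent weights $\gamma_a=1/\alpha_a$ forced by non-separability (an \emph{exact} potential whose gradient equals the cost does not exist here, which is precisely why the weighted $F$ was needed), $\tfrac{dF}{dt}$ is a $\gamma$-weighted covariance with no automatic sign. Rather than fight the weights, I would exploit the structure of linearly non-separable costs: since $\overline{c_{i,a}}=\alpha_a\,\mathbb{E}_{\bm{r}_{-i}}\!\big[\lambda(\alpha_a+{\textstyle\sum_{j\neq i}\alpha_{r_j}})\big]$, a pointwise comparison (using $\alpha_a\le\alpha_b$ and $\lambda$ increasing) gives $\overline{c_{i,a}}\le\overline{c_{i,b}}\iff\alpha_a\le\alpha_b$ \emph{for every} $\bm{\pi}$. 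Thus the ranking of actions by expected cost is independent of the strategy profile, and a resource $a^\star$ of minimal $\alpha$ is a strict best response at every $\bm{\pi}$. Every other action is then strictly dominated with a uniform gap (by compactness of $\Delta$), so the standard replicator elimination argument---e.g.\ tracking $-\sum_i\log\pi_{i,a^\star}$, which strictly decreases while any mass sits on a dominated action---drives $\pi_{i,a}\to0$ for all $a$ with $\alpha_a>\min_s\alpha_s$.

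\textbf{Conclusion.} This pushes all mass onto the minimal-$\alpha$ face, which by~\eqref{eq:NE} is exactly the NE set. In the generic case of a unique minimizer $a^\star$ of $\alpha$, the limit is the vertex $(\bm{e}_{a^\star},\dots,\bm{e}_{a^\star})$, the unique pure NE; the hypothesis that $\bm{\pi}^{(0)}$ is non-pure is used precisely here, to guarantee $\pi^{(0)}_{i,a^\star}>0$ so that the monotone coordinate actually rises to $1$ and no spurious vertex is a limit. If several resources attain $\min_s\alpha_s$, the minimal face consists entirely of rest points, and to still reach a \emph{pure} NE I would linearize $f$ at an interior (mixed) rest point of that face and exhibit a positive eigenvalue, so that such points are unstable and, from a non-pure start lying off their measure-zero stable manifolds, the trajectory is carried to a vertex. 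I expect this Lyapunov/instability analysis under action-dependent weights to be the crux; the profile-independent cost ranking above is what renders it tractable, and for a general game merely satisfying~\eqref{eq:common} one would instead have to push the weighted-covariance estimate and the instability linearization through directly.
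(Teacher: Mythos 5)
Your route diverges from the paper's at exactly the point you flag as the ``main obstacle.'' The paper does push the Lyapunov step through: pairing the $(a,b)$ and $(b,a)$ terms in $\sum_{i,a}\gamma_a\overline{c_{i,a}}\,f_{i,a}$ gives
\[
\frac{dF}{dt}=-\frac{1}{c_{\max}}\sum_{i}\sum_{a<b}\pi_{i,a}\pi_{i,b}\left(\overline{c_{i,a}}-\overline{c_{i,b}}\right)\left(\gamma_a\overline{c_{i,a}}-\gamma_b\overline{c_{i,b}}\right),
\]
and for $\mathcal{G}$ the two factors always share a sign, because $\gamma_a\overline{c_{i,a}}=\mathbb{E}_{\bm{\pi}}\big[\lambda(L)\mid\bm{\pi}_i=\bm{e}_a\big]$ and $\overline{c_{i,a}}=\alpha_a\,\mathbb{E}_{\bm{\pi}}\big[\lambda(L)\mid\bm{\pi}_i=\bm{e}_a\big]$ are both increasing in $\alpha_a$ --- precisely the profile-independent ordering you exploit later. (The paper writes the summand as $\gamma_a\big(\overline{c_{i,a}}-\overline{c_{i,b}}\big)^2$, which is only literally correct when the $\gamma$'s coincide, but the sign conclusion survives for $\mathcal{G}$.) So the weighted covariance does have an automatic sign here, and you abandoned the Lyapunov argument one observation too early. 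Your replacement --- profile-independent strict dominance of every non-minimal-$\alpha$ action plus the $\log(\pi_{i,a^\star}/\pi_{i,a})$ elimination --- is sound and in the generic case actually sharper than the paper's proof: it identifies the limit as the vertex characterized by \eqref{eq:NE}, whereas ``$F$ non-increasing, therefore pure NE'' leaves the rest-point classification implicit.

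Two genuine problems remain. First, your tie-breaking step fails: if several resources attain $\min_s\alpha_s$, then $\overline{c_{i,a}}=\overline{c_{i,b}}$ identically for any two of them, the flow preserves the ratios $\pi_{i,a}/\pi_{i,b}$ on the minimal face, and every point of that face is a rest point, attracting transversally and neutral tangentially --- there is no positive eigenvalue to exhibit, and the trajectory genuinely converges to a mixed point of that face rather than to a vertex. The proposition is only clean in the single-minimizer case (a caveat the paper also glosses over). Second, scope: the statement is for any finite game admitting an action-dependent continuous potential, i.e.\ costs of the form \eqref{eq:common} with arbitrary $\mu$; your dominance argument uses the monotone structure $\mu=\lambda(L)$ specific to $\mathcal{G}$ and does not cover that generality (nor, in fairness, does the paper's sign claim). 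Finally, note that ``non-pure'' initial strategies do not guarantee $\pi^{(0)}_{i,a^\star}>0$; since the faces $\{\pi_{i,a}=0\}$ are invariant, your argument --- and the proposition itself --- really requires interior initial conditions.
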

\begin{proof}
This proof is inspired by the one of Theorem 3.3 of~\cite{SASTRY94}.
Here, the continuous potential $F$ of game $\mathcal{G}$ is action-dependant and associated with constants $\gamma_a$:
\begin{equation*}
\begin{aligned}
    \frac{d F}{d t}\left(\Pi\right) &= 
    -\sum_{i\in\mathcal{N}}\sum_{k=1}^M\sum_{l>k}\gamma_{a_k}  \pi_{i, a_k}\pi_{i,a_l} \frac{\left(\overline{c_{i,a_k}}-\overline{c_{i,a_l}}\right)^2}{c_{\max}}<0 \,, 
\end{aligned}
\end{equation*}
with $\mathcal{R}=\{a_1,\dots,a_M\}$ the set of resources.
Then, $t\mapsto F(\Pi(t))$ is non-increasing.
Therefore, the RLA always converges to a pure NE.
\end{proof}

Note that Algorithm~\ref{algo:Sastry} works with synchronous global updates, meaning that all players update their strategy simultaneously at each iteration.
An asynchronous version of Algorithm~\ref{algo:Sastry} can be considered, where only one player updates her strategy at each iteration.
A player $i$ is chosen for iteration $n$ with uniform probability $p_i\triangleq\frac{1}{N}$. The convergence of such an asynchronous algorithm towards a pure $\varepsilon$-NE\footnote{For any $\varepsilon\geq 0$, a  pure strategy vector $\bm{r}^*$ is an $\varepsilon$-Nash Equilibrium if $\forall i,a,\quad
    c_{r_i^*}(\bm{r}^*) \leq c_{a}\left(a, \bm{r}_{-i}^*\right) + \varepsilon$.}
has been proven in~\cite{Bournez09} for games having an ordinal potential function. This result can be applied directly to our framework of finite congestion games with linearly non-separable costs, as an ordinal potential function exists (see Prop.~\ref{prop:pot}). In next Section, we illustrate the convergence of both synchronous and asynchronous RLA in a finite smart charging congestion game with linearly non-separable costs.

\section{A smart charging game}
\label{Smart}

\subsection{Description}

\begin{figure}
    \centering
    \includegraphics[width = 0.32\textwidth]{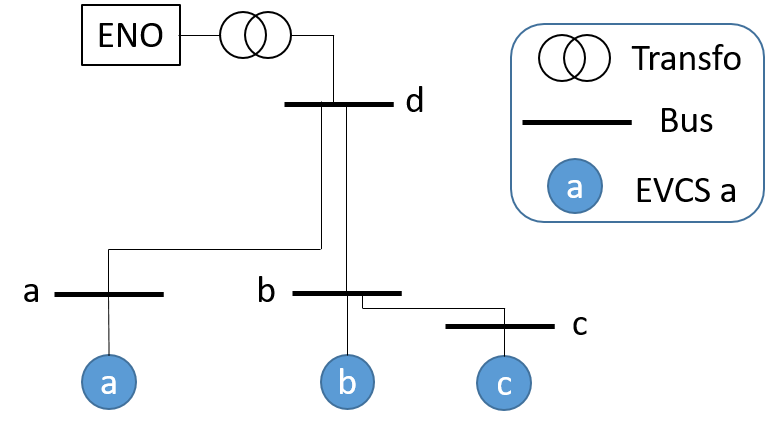}
    \caption{\small Electrical network (with ENO its operator) containing the three EVCS.}
    \label{fig:schema}
\vspace{-6mm}
\end{figure}

In this section, the synchronous reinforcement learning algorithm and its asynchronous version are illustrated on a real-life example of a finite congestion game with linearly non-separable costs.
The players are Electric Vehicle (EV) users who choose at which Charging Station (EVCS) they charge their EV battery (here, the EVCS are the resources).
These EVCS are part of an electrical grid network shown in Fig.~\ref{fig:schema}.
\tcr{For simplicity, both the EVCS and the grid are supposed to be managed by the same Electric Network Operator (ENO).}
Note that the learning algorithm of previous section can be applied to more general grid topologies.
At each electric bus $r$, there is some (fixed) power demand $L_r^0$ corresponding to other usages than EV charging.
Choices of EV users only depend on the \tcr{observed} cost $c_r$ of their charging operation at EVCS $r$, proportional to the charging unit price $p_r$.
All users are supposed to have the same charging need $\rho$ (in kWh) to simplify notations, but this assumption (symmetric game) can be readily relaxed, in line with Remark~\ref{rmq:hetero}. Note that the charging operation is supposed to last exactly one hour, at constant power (then equal to $\rho$, in kW). Thus, the charging cost for any EV user at EVCS $r$ is $c_r\triangleq\rho \times p_r$.

In our framework, the ENO chooses charging unit prices $p_r$ as incentives to reduce its global cost $H$. 
Here, marginal cost pricing schemes are considered: $p_r \triangleq \partial H / \partial L_r$ with $L_r \triangleq L^0_r + n_r(\bm{r}) \rho$ the total power demand at EVCS $r$. The ENO global cost $H$ is defined as the cost of the electricity generation needed to fulfill the total power need $\bm{\mathcal{L}}\triangleq(L_r)_r$ in the grid network considered. Typically, this cost is a quadratic function~\cite{Mohsenian10} of the (apparent) power $S_0$ at the head of the grid:
\begin{equation}
    H(\bm{\mathcal{L}}) \triangleq \eta S_0^2(\bm{\mathcal{L}})\,,
\end{equation}
with $\eta$ a scaling constant. This power can be obtained by running a Power Flow algorithm, more precisely the Bus Injection Model~\cite{Hu15}. This algorithm solves the power balance equation at each bus (between the given power production/load $S_{0,k}$ at bus $k$ and power flows $S_k$ from/to the bus):
\vspace{-3mm}
\begin{equation}
S_{0,k} = U_k \sum_{m\in X_k} \overline{Y_{k,m}}\overline{U_m} ~(\triangleq S_k)\,,
\label{eq:pf}
\end{equation}
with $U_k$ the complex voltage at bus $k$, $X_k$ the set of buses connected to bus $k$ and $Y_{k,m}$ the admittance of the line between buses $k$ and $m$.

\tcr{The charging unit price $p_r = \partial H / \partial L_r (\bm{\mathcal{L}})$ at EVCS $r$ is not necessarily a function of a linear combination of the action vector $\bm{r}$ as in~\eqref{eq:nonsep_costs}, due to the complexity of power flow equations~\eqref{eq:pf}.
Therefore, the grid is reduced to only the transformer bus $d$, as if all the power demand came from it, and the grid topology is replaced by constants $\bm{\Tilde{\alpha}}\triangleq(\Tilde{\alpha}_r)$:} 
\begin{equation}
\vspace{-1mm}
\frac{\partial H}{\partial L_r}(\bm{\mathcal{L}})
\simeq \frac{\partial H}{\partial L_r}(\bm{\Tilde{\mathcal{L}}})
\Big(= \Tilde{\alpha}_r \frac{\partial H}{\partial L_d}(\bm{\Tilde{\mathcal{L}}})\Big)\,, ~ \text{ with}
\end{equation}
\vspace{-2mm}
\begin{equation*}
\bm{\mathcal{L}} \triangleq (L_a\,, L_b\,, L_c\,, 0) ~\text{ and }~ \bm{\Tilde{\mathcal{L}}} \triangleq (0\,, 0\,, 0\,, \sum_r \Tilde{\alpha}_r L_r)\,.
\vspace{-2mm}
\end{equation*}
The meaning of constant $\Tilde{\alpha}_r$ is that a power demand $L_r$ at EVCS $r$ is equivalent (i.e., yields equal marginal costs) to a power demand $\Tilde{\alpha}_r L_r$ at the transformer bus $d$.
Coefficients $\bm{\Tilde{\alpha}}$ are obtained by solving the following square minimization:
\vspace{-2mm}
\begin{equation}
    \bm{\Tilde{\alpha}} \triangleq \argmin_{\bm{\alpha}>0} \int_0^{L^\text{max}} \Big[\frac{\partial H}{\partial L_a}
    \left(L^0_a + L_a\,, L^0_b\,,L^0_c\,,0\right)
    \label{eq:min}
\end{equation}
\begin{equation*}
    - \alpha_a\frac{\partial H}{\partial L_d}
\Big(\sum_r\alpha_r L^0_r + \alpha_a L_a\Big)
    \Big]^2 \text{d}L_a + \int (b) + \int(c) \,,
\end{equation*}
where $\int (b) + \int(c)$ means that the first integral in~\eqref{eq:min} is repeated for EVCS $b$ and $c$.


Then, charging unit prices $p_r\triangleq \Tilde{\alpha}_r \partial H / \partial L_d (\bm{\Tilde{\mathcal{L}}})$ lead to the following charging cost $c_r$ for EV users choosing EVCS $r$:
\vspace{-3mm}
\begin{equation}
        c_r = \rho \times \Tilde{\alpha}_r \frac{\partial H}{\partial L_d}\left(\sum_s \rho\Tilde{\alpha}_s n_s(\bm{r}) + \sum_s\Tilde{\alpha}_s L_s^0\right)\,.
\end{equation}
This cost function verifies Definition~\ref{def:nonsep} with $\alpha_r \triangleq \rho \Tilde{\alpha}_r$ and $\lambda \triangleq \partial H / \partial L_d$, an increasing function of $L(\bm{r})=\sum_s \rho\Tilde{\alpha}_s n_s(\bm{r})$.

\subsection{Numerical illustrations}
The parameters of the smart charging game are set as follows:
a typical French 20kV electrical network has around 1500 customers, with a standard 6kVA contract power.
The total number of EV users is set to $N=1500$ and the energy need is $\rho = 3$kWh, half of the daily mean individual EV consumption in France\footnote{Enquete Nationale Transports et Deplacements:
\url{https://utp.fr/system/files/Publications/UTP_NoteInfo1103_Enseignements_ENTD2008.pdf} (in French).}.
Based on bus lines characteristics and the grid network topology considered in Fig.~\ref{fig:schema}, for each EVCS $r$, linear coefficients $\Tilde{\alpha}_r$ are obtained solving the square minimization problem (\ref{eq:min}) :
$\Tilde{\alpha}_a = 1.12$, $\Tilde{\alpha}_b = 1.07$ and $\Tilde{\alpha}_c=1.18$.
Note that EVCS $a$ and $c$ have a greater impact on ENO global cost $H$ than EVCS $b$ because they are further away from the transformer (see Fig.~\ref{fig:schema}).
The ENO global cost $H$ is adjusted with $\eta = 5. 10^{-3}$ (\euro/MVA$^2$) so that charging unit prices $p_r$ remain between 10 and 20c\euro/kWh. Regarding learning characteristics, the learning parameter is set to $\delta = 0.5$ and the initial mixed strategies are equally distributed among resources: $\pi_{i,r} = 1/3$ for all users $i$ and EVCS $r=a,b,c$. Similarly, for the asynchronous version of Algorithm~\ref{algo:Sastry}, each player $i$ is chosen with a probability $p_i=1/N$ for the update phase.

\begin{figure}
    \centering
    \includegraphics[width = 0.5\textwidth]{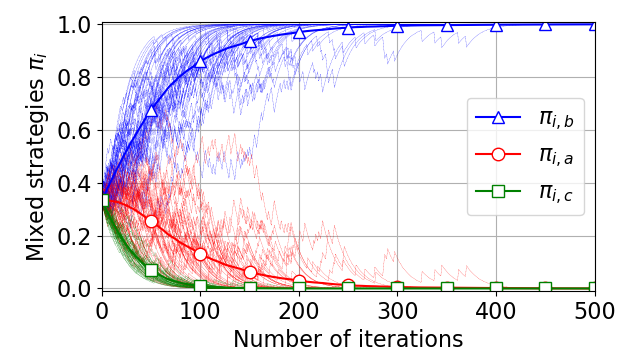}
    \caption{\small Evolution of mixed strategies $\bm{\pi}_i$ (of only 50 users) throughout iterations of Algorithm~\ref{algo:Sastry}.
    Thicker lines represent the average (over all users) mixed strategy.}
    \label{fig:sastry}
\vspace{-5mm}
\end{figure}

Figures~\ref{fig:sastry} and~\ref{fig:bournez} show the evolution of mixed strategies $\bm{\pi}_i$ (of only 50 users, for visibility) throughout iterations respectively for the synchronous Algorithm~\ref{algo:Sastry} and for the asynchronous version.
As mentioned in Section~\ref{sec:pure}, although the pure NE of this game (all users choosing EVCS with lowest impact on grid costs) may seem trivial, EV users need hundreds of iterations to learn it (see Fig.\ref{fig:sastry}), as they have no information on the grid topology.
Considering thicker lines (average mixed strategy over all users), it can be seen that, while the synchronous algorithm converges in less than 500 iterations (Fig.~\ref{fig:sastry}), it takes more than thousand times as many iterations for the asynchronous one (Fig.~\ref{fig:bournez}).
This is understandable since in the asynchronous version, only one player updates her strategy at each iteration, instead of all players like in the original Algorithm~\ref{algo:Sastry}.
This also explains why larger number of players lead to slower convergence for the asynchronous version, while it has no effect on the original Algorithm~\ref{algo:Sastry}.
Note that the evolution of mixed strategies may slightly vary from one execution to another (of either algorithm), due to actions randomly chosen from mixed strategies.
Similarly, the number of iterations until convergence depends on the initial mixed strategies $\bm{\pi}^{(0)}$ and decreases with the learning parameter $\delta$.

\vspace{-1mm}
\section{Conclusions and perspectives}
\vspace{-1mm}
\label{conc}
In this paper, a reinforcement learning algorithm (RLA) has been applied to obtain in a fully decentralized manner a pure Nash equilibrium (NE) for a finite congestion game with non-separable cost functions.
Non-separability of cost functions makes it difficult to prove the existence of pure NE and the convergence property of RLA in such particular congestion games.
When cost functions are separable, all these results come from the existence of an exact potential. However, assuming costs are linearly non-separable, we are able to prove the existence of an ordinal potential function, which can serve as a Lyapunov function for proving the convergence of simple RLA. Moreover, these results were applied to a smart charging game in which EV users choose selfishly a charging station depending on the smart charging price, which is based on a Power flow solution of the impact of EV charging on the grid.
\tcr{In a future work, a rationality coefficient for players will be added to the RLA.}

\begin{figure}
    \centering
    \includegraphics[width = 0.5\textwidth]{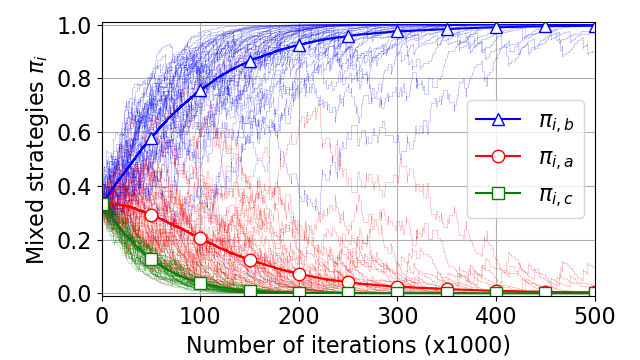}
    \caption{\small Evolution of mixed strategies $\bm{\pi}_i$ (of only 50 users) throughout iterations of asynchronous version of Algorithm~\ref{algo:Sastry}}
    \label{fig:bournez}
    \vspace{-3mm}
\end{figure}    

\vspace{-1mm}

\end{document}